\renewcommand{\theenumii}{\@roman\c@enumii}
\newtheorem{lemma}{Lemma}
\newtheorem{theorem}{Theorem}
\begin{document}

\vspace{-.4cm}

\vspace{-.55cm}

\title{Nash equilibrium of partially asymmetric three-players zero-sum game with two strategic variables}

\author{%
Atsuhiro Satoh\thanks{atsatoh@hgu.jp}\\[.01cm]
Faculty of Economics, Hokkai-Gakuen University,\\[.02cm]
Toyohira-ku, Sapporo, Hokkaido, 062-8605, Japan,\\[.01cm]
\textrm{and} \\[.1cm]
Yasuhito Tanaka\thanks{yasuhito@mail.doshisha.ac.jp}\\[.01cm]
Faculty of Economics, Doshisha University,\\
Kamigyo-ku, Kyoto, 602-8580, Japan.\\}

\vspace{-.65cm}

\date{}

\maketitle
\thispagestyle{empty}

\vspace{-1.65cm}


\begin{abstract}
We consider a partially asymmetric three-players zero-sum game with two strategic variables. Two players (A and B) have the same payoff functions, and Player C does not. Two strategic variables are $t_i$'s and $s_i$'s for $i=A, B, C$. Mainly we will show the following results.
\begin{enumerate}
	\item The equilibrium when all players choose $t_i$'s is equivalent to the equilibrium when Players A and B choose $t_i$'s and Player C chooses $s_C$ as their strategic variables.
	\item The equilibrium when all players choose $s_i$'s is equivalent to the equilibrium when Players A and B choose $s_i$'s and Player C chooses $t_C$ as their strategic variables.
\end{enumerate}
The equilibrium when all players choose $t_i$'s and the equilibrium when all players choose $s_i$'s are not equivalent although they are equivalent in a symmetric game in which all players have the same payoff functions.
\end{abstract}

\vspace{-.4cm}

\begin{description}
\item[Keywords:] partially asymmetric three-players zero-sum game, Nash equilibrium, two strategic variables
\end{description}


\clearpage

\section{Introduction}

We consider a three-players zero-sum game with two strategic variables. Three players are Players A, B and C. Two strategic variables are $t_i$ and $s_i$, $i=A, B, C$. They are related by invertible functions. The game is symmetric for Players A and B in the sense that they have the same payoff functions. On the other hand, Player C may have a different payoff function. Thus, the game is \emph{partially asymmetric}. In Section 3 we will show the following main results. 
\begin{enumerate}
	\item The equilibrium when all players choose $t_i$'s is equivalent to the equilibrium when Players A and B choose $t_i$'s and Player C chooses $s_C$ as their strategic variables.
	\item The equilibrium when all players choose $s_i$'s is equivalent to the equilibrium when Players A and B choose $s_i$'s and Player C chooses $t_C$ as their strategic variables.
\end{enumerate}

An example of three-players zero-sum game with two strategic variables is a relative profit maximization game in a three firms oligopoly with differentiated goods. See Section \ref{ex}. In that section we will show;
\begin{enumerate}
	\item The equilibrium when all players choose $t_i$'s is not equivalent to the equilibrium when Players A and C choose $t_i$'s and Player B chooses $s_B$ as their strategic variables.
	\item The equilibrium when all players choose $t_i$'s is not equivalent to the equilibrium when Players A and B choose $s_i$'s and Player C chooses $t_C$ as their strategic variables.
	\item The equilibrium when all players choose $s_i$'s is not equivalent to the equilibrium when Players A and B choose $t_i$'s and Player C chooses $s_C$ as their strategic variables.
	\item The equilibrium when all players choose $s_i$'s is not equivalent to the equilibrium when Players A and C choose $s_i$'s and Player B chooses $t_B$ as their strategic variables.
	\item The equilibrium when all players choose $t_i$'s is not equivalent to the equilibrium when all players $s_i$'s.
\end{enumerate}

In a symmetric game,  in which all players have the same payoff functions, they are all equivalent\footnote{\cite{hst}.}.

In the next section we present a model of this paper and prove a preliminary result which is a variation of Sion's minimax theorem. 

\section{The model}

We consider a three-players zero-sum game with two strategic variables. Three players are Players A, B and C. Two strategic variables are $t_i$ and $s_i$, $i=A, B, C$. The game is symmetric for Players A and B in the sense that they have the same payoff functions. On the other hand, Player C may have a different payoff function.

$t_i$ is chosen from $T_i$ and $s_i$ is chosen from $S_i$. $T_i$ and $S_i$ are convex and compact sets in linear topological spaces, respectively, for each $i\in \{A, B, C\}$. The relations of the strategic variables are represented by 
\begin{equation*}
s_i=f_i(t_A, t_B, t_C),\ i=A, B, C,
\end{equation*}
and 
\begin{equation*}
t_i=g_i(s_A, s_B, s_C),\ i=A, B, C.
\end{equation*}
$(f_A, f_B, f_C)$ and $(g_A, g_B, g_C)$ are continuous, invertible, one-to-one and onto functions. When Players A and B choose $t_A$ and $t_B$ and Player C chooses $s_C$, then $t_C$ is determined according to 
\begin{equation*}
t_C=g_C(f_A(t_A, t_B, t_C),f_B(t_A, t_B, t_C), s_C).
\end{equation*}
We denote this $t_C$ by $t_C(t_A, t_B, s_C)$. 

When Players A and B choose $s_A$ and $s_B$ and Player C chooses $t_C$, then $t_A$ and $t_B$ are determined according to 
\begin{equation*}
\left\{ 
\begin{array}{l}
t_A=g_A(s_A, s_B, f_C(t_A, t_B, t_C))\\
t_B=g_B(s_A, s_B, f_C(t_A, t_B, t_C)). 
\end{array}%
\right.
\end{equation*}
We denote these $t_A$ and $t_B$ by $t_A(s_A, s_B, t_C)$ and $t_B(s_A, s_B,
t_C)$.

When all players choose $s_{A}$, $s_{B}$ and $s_{C}$, $t_{A}$, then $t_{B}$ and $t_{C}$ are determined according to 
\begin{equation*}
t_{A}=g_{A}(s_{A},s_{B},s_{C}),\ t_{B}=g_{B}(s_{A},s_{B},s_{C}),\
t_{C}=g_{C}(s_{A},s_{B},s_{C}).
\end{equation*}%
Denote these $t_{A}$, $t_{B}$ and $t_{C}$ by $t_{A}(s_{A},s_{B},s_{C})$, $%
t_{B}(s_{A},s_{B},s_{C})$ and $t_{C}(s_{A},s_{B},s_{C})$.

The payoff function of Player $i$ is $u_i,\ i=A, B, C$. It is written as 
\begin{equation*}
u_i(t_A, t_B, t_C),\ i\in \{A, B, C\}.
\end{equation*}
We assume

\begin{quote}
$u_i: T_1\times T_2\times T_3\Rightarrow \mathbb{R}$ for each $i\in \{A, B, C\}$ is continuous on $T_1\times T_2\times T_3$. Thus, it is continuous on $S_1\times S_2\times S_3$ through $f_i,\ i=A, B, C$. It is quasi-concave on $T_i$ and $S_i$ for a strategy of each other player, and quasi-convex on $T_j,\ j\neq i$ and $S_j,\ j\neq i$ for each $t_i$ and $s_i$.
\end{quote}

We do not assume differentiability of the payoff functions.

Symmetry of the game for Players A and B means that in the payoff function of each player, Players A and B are interchangeable. Since the game is a zero-sum game, the sum of the values of the payoff functions of the players is zero.

We assume that all $T_i$'s are identical, and all $S_i$'s are identical. Denote them by $T$ and $S$.

Sion's minimax theorem (\cite{sion}, \cite{komiya}, \cite{kind}) for a
continuous function is stated as follows.

\begin{lemma}
Let $X$ and $Y$ be non-void convex and compact subsets of two linear topological spaces, and let $f:X\times Y \rightarrow \mathbb{R}$ be a function that is continuous and quasi-concave in the first variable and continuous and quasi-convex in the second variable, then
\begin{equation*}
\max_{x\in X}\min_{y\in Y}f(x,y)=\min_{y\in Y}\max_{x\in X}f(x,y).
\end{equation*}
\label{l1}
\end{lemma}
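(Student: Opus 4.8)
The plan is to prove the two inequalities $\max_x \min_y f \le \min_y \max_x f$ and $\min_y \max_x f \le \max_x \min_y f$ separately; their conjunction gives the claimed equality. All the maxima and minima are attained because $X$ and $Y$ are compact and $f$ is continuous, so I may write $\max$ and $\min$ throughout. The first inequality is the routine ``weak duality'' half: for every $x\in X$ and $y\in Y$ one has $\min_{y'\in Y} f(x,y') \le f(x,y) \le \max_{x'\in X} f(x',y)$, and taking $\max_x$ on the left and then $\min_y$ on the right yields it. This direction uses neither convexity nor quasi-concavity/quasi-convexity, so I expect it to be a couple of lines.

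The content of the lemma lies entirely in the reverse inequality, which I would prove by contradiction. Writing $v_* = \max_x\min_y f$ and $v^* = \min_y \max_x f$, suppose $v_* < v^*$ and fix a constant $c$ with $v_* < c < v^*$. For each $y\in Y$ set $F_y = \{x\in X : f(x,y) \ge c\}$. Each $F_y$ is closed by continuity of $f(\cdot,y)$ and convex by quasi-concavity of $f(\cdot,y)$; moreover $F_y \ne \emptyset$ because $\max_x f(x,y) \ge v^* > c$. If I can show that $\bigcap_{y\in Y} F_y \ne \emptyset$, then a point $x^*$ in this intersection satisfies $f(x^*,y)\ge c$ for all $y$, hence $\min_y f(x^*,y) \ge c$ and so $v_* \ge c$, contradicting $c > v_*$. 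Since $X$ is compact and the $F_y$ are closed, by the finite intersection property it suffices to show that every finite subfamily $F_{y_1}\cap\cdots\cap F_{y_n}$ is nonempty.

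This finite intersection statement is the heart of the matter, and it is where I expect the real difficulty. The naive idea, namely to maximize $\min_{1\le j\le n} f(x,y_j)$ over $x$, fails because a minimum of quasi-concave functions need not be quasi-concave, so quasi-concavity in $x$ alone does not deliver a common point. The quasi-convexity of $f$ in $y$ must therefore be brought in, and the natural device is an induction on $n$: restrict attention to the finite-dimensional convex hull of $y_1,\dots,y_n$, on which $\max_x f(x,y) > c$ holds for every $y$ (because $v^* > c$ on all of $Y$), and use quasi-convexity along the segment joining $y_n$ to a suitable convex combination of $y_1,\dots,y_{n-1}$, together with the connectedness of that segment (an intermediate-value argument), to splice the inductive hypothesis into a point lying in all $n$ sets. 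Equivalently one can invoke a KKM/Helly-type covering argument on the finite-dimensional hull. Carrying out this inductive/combinatorial core cleanly, in particular setting up the segment lemma so that both hypotheses on $f$ are used in the correct variables, is the step I would budget the most care for; the compactness reduction of the previous paragraph and the weak-duality half are comparatively mechanical.
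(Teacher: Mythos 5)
The paper never proves this lemma: it is Sion's minimax theorem, stated in the formulation of Kindler (2005) and justified entirely by citation to Sion (1958), Komiya (1988) and Kindler (2005) (``We follow the description of Sion's theorem in \cite{kind}''). So there is no in-paper argument to compare against, and the right benchmark is the cited literature. Your outline is, in substance, Komiya's elementary proof: the weak-duality half; then, supposing $v_*<c<v^*$, the closed convex level sets $F_y=\{x\in X: f(x,y)\ge c\}$ (closed and nonempty by continuity and $\max_x f(x,y)\ge v^*>c$, convex by quasi-concavity in $x$); the compactness reduction to the finite intersection property; and an induction whose two-point core uses quasi-convexity in $y$ and connectedness along the segment $[y_1,y_2]$ --- your alternative KKM/Helly remark is closer to Kindler's variant. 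You correctly identify both the trivial half and the one genuinely hard step, so the plan is sound; but since you left that step schematic, two details deserve the care you promise it. First, the two-point lemma really needs \emph{two} thresholds $c<d<v^*$: for each $z$ on the segment one considers $C_z=\{x: f(x,z)\ge d\}$, which is convex (quasi-concavity in $x$ again, not only for the $F_y$'s) hence connected, and which by quasi-convexity in $y$ and the contradiction hypothesis is covered by the two disjoint closed sets $\{x: f(x,y_1)\ge c\}$ and $\{x: f(x,y_2)\ge c\}$, so it lies wholly in one of them; a second connectedness argument, on the segment in $Y$, then yields the contradiction, and the strict gap between $c$ and $d$ is what makes the relevant subsets of the segment closed --- with a single level the partition argument does not close. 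Second, to write $\min_{y}\max_{x} f$ at all you should note that $y\mapsto\max_{x\in X}f(x,y)$ is lower semicontinuous (a supremum of continuous functions), so the outer minimum is attained; compactness plus continuity, as you phrase it, only covers the inner optimizations directly. With those points supplied your argument is exactly the known elementary proof, i.e., precisely what the paper's citations delegate to.
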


We follow the description of Sion's theorem in \cite{kind}. 

Applying this lemma to the situation of this paper, we have the following relations.
\begin{equation*}
\max_{t_{A}\in T}\min_{t_{B}\in T}u_{A}(t_{A},t_{B},t_{C})=\min_{t_{B}\in
T}\max_{t_{A}\in T}u_{A}(t_{A},t_{B},t_{C}),\ \max_{t_{B}\in
T}\min_{t_{A}\in T}u_{B}(t_{A},t_{B},t_{C})=\min_{t_{A}\in T}\max_{t_{B}\in
T}u_{B}(t_{A},t_{B},t_{C}).
\end{equation*}%
\begin{equation*}
\max_{t_{A}\in T}\min_{t_{C}\in T}u_{A}(t_{A},t_{B},t_{C})=\min_{t_{C}\in
T}\max_{t_{A}\in T}u_{A}(t_{A},t_{B},t_{C}),\ \max_{t_{B}\in
T}\min_{t_{C}\in T}u_{B}(t_{A},t_{B},t_{C})=\min_{t_{C}\in T}\max_{t_{B}\in
T}u_{B}(t_{A},t_{B},t_{C}).
\end{equation*}%
\begin{align*}
& \max_{t_{A}\in T}\min_{t_{B}\in
T}u_{A}(t_{A},t_{B},t_{C}(t_{A},t_{B},s_{C}))=\min_{t_{B}\in
T}\max_{t_{A}\in T}u_{A}(t_{A},t_{B},t_{C}(t_{A},t_{B},s_{C})), \\
& \max_{t_{B}\in T}\min_{t_{A}\in
T}u_{B}(t_{A},t_{B},t_{C}(t_{A},t_{B},s_{C}))=\min_{t_{A}\in
T}\max_{t_{B}\in T}u_{B}(t_{A},t_{B},t_{C}(t_{A},t_{B},s_{C})).
\end{align*}%
\begin{align*}
& \max_{t_{A}\in T}\min_{s_{C}\in
S}u_{A}(t_{A},t_{B},t_{C}(t_{A},t_{B},s_{C}))=\min_{s_{C}\in
S}\max_{t_{A}\in T}u_{A}(t_{A},t_{B},t_{C}(t_{A},t_{B},s_{C})),\\
& \max_{t_{B}\in T}\min_{s_{C}\in
S}u_{B}(t_{A},t_{B},t_{C}(t_{A},t_{B},s_{C}))=\min_{s_{C}\in
S}\max_{t_{B}\in T}u_{B}(t_{A},t_{B},t_{C}(t_{A},t_{B},s_{C})), \\
\end{align*}%

Further we show the following result.

\begin{lemma}
\begin{align*}
&\min_{t_C\in T}\max_{t_A\in T}u_A(t_A, t_B, t_C)=\min_{s_C\in
S}\max_{t_A\in T}u_A(t_A, t_B, t_C(t_A,t_B,s_C)) \\
&=\max_{t_A\in T}\min_{s_C\in S}u_A(t_A, t_B, t_C(t_A,t_B,s_C))=\max_{t_A\in
T}\min_{t_C\in T}u_A(t_A, t_B, t_C),
\end{align*}
and 
\begin{align*}
&\min_{t_C\in T}\max_{t_B\in T}u_B(t_A, t_B, t_C)=\min_{s_C\in
S}\max_{t_B\in T}u_B(t_A, t_B, t_C(t_A,t_B,s_C)) \\
&=\max_{t_B\in T}\min_{s_C\in S}u_B(t_A, t_B, t_C(t_A,t_B,s_C))=\max_{t_B\in
T}\min_{t_C\in T}u_B(t_A, t_B, t_C).
\end{align*}
\label{l3}
\end{lemma}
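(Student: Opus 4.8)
The displayed identities assert the equality of four quantities; for the first block, write them as $\mathrm{(I)}=\min_{t_C\in T}\max_{t_A\in T}u_A(t_A,t_B,t_C)$, $\mathrm{(II)}=\min_{s_C\in S}\max_{t_A\in T}u_A(t_A,t_B,t_C(t_A,t_B,s_C))$, $\mathrm{(III)}=\max_{t_A\in T}\min_{s_C\in S}u_A(t_A,t_B,t_C(t_A,t_B,s_C))$ and $\mathrm{(IV)}=\max_{t_A\in T}\min_{t_C\in T}u_A(t_A,t_B,t_C)$. The outer equalities $\mathrm{(I)}=\mathrm{(IV)}$ and $\mathrm{(II)}=\mathrm{(III)}$ are already available: they are exactly the two minimax relations obtained by applying Lemma \ref{l1} to $u_A(t_A,t_B,\cdot)$ in the variables $(t_A,t_C)$ and to the composite $u_A(t_A,t_B,t_C(t_A,t_B,\cdot))$ in the variables $(t_A,s_C)$, both displayed just above this lemma. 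Hence the whole chain will follow once I establish the single bridging identity $\mathrm{(III)}=\mathrm{(IV)}$, after which $\mathrm{(I)}=\mathrm{(IV)}=\mathrm{(III)}=\mathrm{(II)}$.

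For $\mathrm{(III)}=\mathrm{(IV)}$ I would fix $t_A$ (and the parameter $t_B$) and compare the two inner minimizations. The key observation is that $t_C(t_A,t_B,s_C)$ is the value of $t_C$ determined by the consistency relation $t_C=g_C(f_A(t_A,t_B,t_C),f_B(t_A,t_B,t_C),s_C)$, which---because $g_C$ is the $C$-component of the inverse of $(f_A,f_B,f_C)$---is equivalent to $s_C=f_C(t_A,t_B,t_C)$. Thus, for fixed $t_A,t_B$, the map $s_C\mapsto t_C(t_A,t_B,s_C)$ is precisely the inverse of the map $t_C\mapsto f_C(t_A,t_B,t_C)$. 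Granting that this latter map is a bijection of $T$ onto $S$, the point $t_C(t_A,t_B,s_C)$ sweeps out all of $T$ as $s_C$ ranges over $S$, so that
\[
\min_{s_C\in S}u_A(t_A,t_B,t_C(t_A,t_B,s_C))=\min_{t_C\in T}u_A(t_A,t_B,t_C)
\]
for every $t_A$. Taking $\max_{t_A\in T}$ of both sides yields $\mathrm{(III)}=\mathrm{(IV)}$ and closes the chain; the second block of equalities, for $u_B$, follows verbatim after interchanging the roles of A and B, using the corresponding minimax relations for $u_B$ recorded above.

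The main obstacle is the bijectivity claim just used: that for fixed $t_A,t_B$ the partial map $t_C\mapsto f_C(t_A,t_B,t_C)$ carries $T$ onto $S$ one-to-one. Global invertibility of $(f_A,f_B,f_C)$ on $T^3$ does not by itself force each such one-dimensional section to be onto, so I would argue it from the well-posedness the model already presupposes, namely that $(t_A,t_B,s_C)$ is an admissible coordinate system for the state---equivalently, that the substitution $(t_A,t_B,t_C)\mapsto(t_A,t_B,f_C(t_A,t_B,t_C))$ is a continuous bijection of $T^3$ onto $T\times T\times S$, which is exactly what makes $t_C(t_A,t_B,s_C)$ well defined and continuous. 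Note that one inequality, $\mathrm{(III)}\ge\mathrm{(IV)}$, is free, since the $s_C$-minimum runs only over the subset $\{t_C(t_A,t_B,s_C):s_C\in S\}\subseteq T$; so in fact only the surjectivity half is needed to upgrade it to equality, and continuity of the composite is what licenses the application of Lemma \ref{l1} behind $\mathrm{(II)}=\mathrm{(III)}$. I expect verifying these invertibility and continuity facts, rather than the minimax bookkeeping, to be the delicate part.
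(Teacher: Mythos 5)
Your proposal is correct, and it takes a genuinely leaner route than the paper. The paper proves \emph{two} bridging identities: it first establishes $\mathrm{(I)}=\mathrm{(II)}$ directly, by a realizability argument that takes $\tilde t_A(s_C)=\arg\max_{t_A\in T}u_A(t_A,t_B,t_C(t_A,t_B,s_C))$, fixes $t_C^0$ via (\ref{tc1}), assumes this argmax is single-valued, invokes the maximum theorem for continuity of $\tilde t_A$, and derives the two inequalities (\ref{4-11}) and (\ref{4-21}) by arguing that any $t_C^0$ (respectively any $s_C^0$) can be realized by an appropriate choice of the other coordinate; it then proves $\mathrm{(III)}=\mathrm{(IV)}$ exactly as you do, by noting that $s_C\mapsto t_C(t_A,t_B,s_C)$ sweeps out $T$, and closes the chain using Sion's theorem for the composite function, i.e.\ $\mathrm{(II)}=\mathrm{(III)}$. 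You instead take \emph{both} Sion relations displayed just before the lemma, $\mathrm{(I)}=\mathrm{(IV)}$ and $\mathrm{(II)}=\mathrm{(III)}$, as given---which is on the same footing as the paper, since it derives them from Lemma \ref{l1} in the same place---and prove only the cheap bridge $\mathrm{(III)}=\mathrm{(IV)}$. Because the paper needs $\mathrm{(II)}=\mathrm{(III)}$ in its final chain anyway, your decomposition renders the paper's entire first part (the single-valuedness assumption, the maximum theorem, continuity of $\tilde t_A$) logically unnecessary: the same conclusion follows with strictly fewer auxiliary hypotheses. Both arguments ultimately rest on the same unproved realizability fact, namely that for fixed $(t_A,t_B)$ the section $t_C\mapsto f_C(t_A,t_B,t_C)$ maps $T$ onto $S$ (equivalently, $s_C\mapsto t_C(t_A,t_B,s_C)$ is onto $T$); the paper asserts this in one sentence (``any value of $s_C$ can be realized by appropriately choosing $t_C$ given $t_A$ and $t_B$''), whereas you correctly flag that it does not follow from global invertibility of $(f_A,f_B,f_C)$ alone, tie it to the well-posedness of $t_C(t_A,t_B,s_C)$ that the model presupposes, and observe that only the surjectivity half is needed since $\mathrm{(III)}\ge\mathrm{(IV)}$ is automatic. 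That is a point of care the paper's own proof glosses over, so your version is both shorter and more honest about where the real assumption lies.
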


\begin{proof}
$\max_{t_A\in T}u_A(t_A, t_B, t_C(t_A,t_B,s_C))$ is the maximum of $u_A$ with respect to $t_A$ given $t_B$ and $s_C$. Let $\tilde{t}_A(s_C)=\arg\max_{t_A\in T}u_A(t_A, t_B, t_C(t_A,t_B,s_C))$, and fix the value of $t_C$ at
\begin{equation}
t_C^0=g_C(f_A(\tilde{t}_A(s_C), t_B, t_C^0), f_B(\tilde{t}_A(s_C), t_B, t_C^0), s_C).\label{tc1}
\end{equation}
We have
\begin{align*}
\max_{t_A\in T}u_A(t_A, t_B, t_C^0)\geq u_A(\tilde{t}_A(s_C), t_B, t_C^0)=\max_{t_A\in T}u_A(t_A, t_B, t_C(t_A,t_B,s_C)),
\end{align*}
where $\max_{t_A\in T}u_A(t_A, t_B, t_C^0)$ is the maximum of $u_A$ with respect to $t_A$ given the value of $t_C$ at $t_C^0$. We assume that $\tilde{t}_A(s_C)=\arg\max_{t_A\in T}u_A(t_A, t_B, t_C(t_A,t_B,s_C))$ is single-valued. By the maximum theorem and continuity of $u_A$, $\tilde{t}_A(s_C)$ is continuous, then any value of $t_C^0$ can be realized by appropriately choosing $s_C$ given $t_B$ according to (\ref{tc1}). Therefore,
\begin{equation}
\min_{t_C\in T}\max_{t_A\in T}u_A(t_A, t_B, t_C)\geq \min_{s_C\in S}\max_{t_A\in T}u_A(t_A, t_B, t_C(t_A,t_B,s_C)).\label{4-11}
\end{equation}

On the other hand, $\max_{t_A\in T}u_A(t_A, t_B, t_C)$ is the maximum of $u_A$ with respect to $t_A$ given $t_B$ and $t_C$. Let $\tilde{t}_A(t_C)=\arg\max_{t_A\in T}u_A(t_A, t_B, t_C)$, and fix the value of $s_C$ at
\begin{equation}
s_C^0=f_C(\tilde{t}_A(t_C), t_B, t_C).\label{sc1}
\end{equation}
Thus, we have
\begin{align*}
\max_{t_A\in T}u_A(t_A, t_B, t_C(t_A,t_B,s_C^0))\geq u_A(\tilde{t}_A(s_C), t_B, t_C(t_A,t_B,s_C^0))=\max_{t_A\in T}u_A(t_A, t_B, t_C),
\end{align*}
where $\max_{t_A\in T}u_A(t_A, t_B, t_C(t_A,t_B,s_C^0))$ is the maximum of $u_A$ with respect to $t_A$ given the value of $s_C$ at $s_C^0$. We assume that $\tilde{t}_A(t_C)=\arg\max_{t_A\in T}u_A(t_A, t_B, t_C)$ is single-valued. By the maximum theorem and continuity of $u_A$, $\tilde{t}_A(t_C)$ is continuous, then any value of $s_C^0$ can be realized by appropriately choosing $t_C$ given $t_B$ according to (\ref{sc1}). Therefore,
\begin{equation}
\min_{s_C\in S}\max_{t_A\in T}u_A(t_A, t_B, t_C(t_A,t_B,s_C))\geq \min_{t_C\in S}\max_{t_A\in T}u_A(t_A, t_B, t_C).\label{4-21}
\end{equation}
Combining (\ref{4-11}) and (\ref{4-21}), we get
\[\min_{s_C\in S}\max_{t_A\in T}u_A(t_A, t_B, t_C(t_A,t_B,s_C))=\min_{t_C\in S}\max_{t_A\in T}u_A(t_A, t_B, t_C).\]
Since any value of $s_C$ can be realized by appropriately choosing $t_C$ given $t_A$ and $t_B$, we have
\[\min_{s_C\in S}u_A(t_A, t_B, t_C(t_A,t_B,s_C))=\min_{t_C\in S}u_A(t_A, t_B, t_C).\]
Thus,
\[\max_{t_A\in T}\min_{s_C\in S}u_A(t_A, t_B, t_C(t_A,t_B,s_C))=\max_{t_A\in T}\min_{t_C\in S}u_A(t_A, t_B, t_C).\]
Therefore,
\begin{align*}
&\min_{t_C\in T}\max_{t_A\in T}u_A(t_A, t_B, t_C)=\min_{s_C\in S}\max_{t_A\in T}u_A(t_A, t_B, t_C(t_A,t_B,s_C)),\\
=&\max_{t_A\in T}\min_{s_C\in S}u_A(t_A, t_B, t_C(t_A,t_B,s_C))=\max_{t_A\in T}\min_{t_C\in T}u_A(t_A, t_B, t_C),
\end{align*}
given $t_B$.

By similar procedures, we can show
\begin{align*}
&\min_{t_C\in T}\max_{t_B\in T}u_B(t_A, t_B, t_C)=\min_{s_C\in S}\max_{t_B\in T}u_B(t_A, t_B, t_C(t_A,t_B,s_C)),\\
=&\max_{t_B\in T}\min_{s_C\in S}u_B(t_A, t_B, t_C(t_A,t_B,s_C))=\max_{t_B\in T}\min_{t_C\in T}u_B(t_A, t_B, t_C),
\end{align*}
given $t_A$. 
\end{proof}

\section{The main results}

In this section we present the following main result of this paper.
\begin{theorem}
The equilibrium when all players choose $t_i$'s is equivalent to the equilibrium when Player C chooses $s_C$ and Players A and B choose $t_i$'s as their strategic variables.\label{t1}
\end{theorem}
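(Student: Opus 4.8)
The plan is to compare the Nash best-response conditions of the two games side by side and to show, using the interchangeability of A and B together with Lemma \ref{l3}, that they single out the same strategy profile and hence the same outcome. Write $(t_A^*,t_B^*,t_C^*)$ for an equilibrium of the game in which all players choose $t_i$, and set $s_C^*=f_C(t_A^*,t_B^*,t_C^*)$. I must show that $(t_A^*,t_B^*,s_C^*)$ is an equilibrium of the game in which A and B choose $t_i$ and C chooses $s_C$, inducing the same $(t_A^*,t_B^*,t_C^*)$, and conversely.

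First I would dispose of Player C. With $t_A^*,t_B^*$ fixed, the map $s_C\mapsto t_C(t_A^*,t_B^*,s_C)$ is a bijection of $S$ onto $T$ by invertibility of $f_C,g_C$, so C's set of attainable $t_C$-values is the same whether C controls $t_C$ or $s_C$. Hence C's condition $u_C(t_A^*,t_B^*,t_C^*)\ge u_C(t_A^*,t_B^*,t_C)$ for all $t_C$ is literally equivalent to its $s_C$-version, and $s_C^*$ is C's best response exactly when $t_C^*$ is. The only thing that can differ between the two equilibria is therefore the best responses of A and B, because when C fixes $s_C$ the induced $t_C=t_C(t_A,t_B,s_C)$ co-varies with A's and B's choices.

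Next I would reduce Player C's objective to $u_A$. Since A and B enter every payoff function interchangeably and best responses are single-valued, the equilibrium is symmetric, $t_A^*=t_B^*$, whence $u_A(t_A^*,t_B^*,t_C)=u_B(t_A^*,t_B^*,t_C)$ as functions of $t_C$. Because the game is zero-sum, maximizing $u_C=-(u_A+u_B)$ over C's variable coincides with minimizing $u_A$ (and $u_B$). Together with A's condition that $t_A^*$ maximizes $u_A(\cdot,t_B^*,t_C^*)$, this makes $(t_A^*,t_C^*)$ a saddle point of $u_A(\cdot,t_B^*,\cdot)$, and Lemma \ref{l1} gives $u_A(t_A^*,t_B^*,t_C^*)=\max_{t_A}\min_{t_C}u_A(t_A,t_B^*,t_C)=\min_{t_C}\max_{t_A}u_A(t_A,t_B^*,t_C)=:v$, with the symmetric statement for $u_B$.

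Finally I would transport this saddle point to the $s_C$-formulation via Lemma \ref{l3}, which yields $\min_{t_C}\max_{t_A}u_A(t_A,t_B^*,t_C)=\min_{s_C}\max_{t_A}u_A(t_A,t_B^*,t_C(t_A,t_B^*,s_C))=\max_{t_A}\min_{s_C}u_A(\cdots)=v$. Thus $\phi(t_A,s_C)=u_A(t_A,t_B^*,t_C(t_A,t_B^*,s_C))$ has minimax equal to maximin equal to $v$, so by compactness and continuity $\phi$ admits a saddle point; using the bijection $s_C\leftrightarrow t_C$ to identify the minimizing argument, that saddle point is C minimizing $u_A$ against a maximizing A, which by single-valuedness is the same saddle point $(t_A^*,t_C^*)$, i.e. it occurs at $t_A=t_A^*$, $s_C=s_C^*$. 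In particular $t_A^*$ maximizes $\phi(\cdot,s_C^*)$, which is precisely A's best-response condition in the $s_C$-game; the same argument with $u_B$ handles B, and C was already settled, so $(t_A^*,t_B^*,s_C^*)$ is an equilibrium with the claimed outcome, and reversing the steps gives the converse. I expect the main obstacle to be exactly this last matching of \emph{arguments} rather than \emph{values}: Lemma \ref{l3} is stated for values, and the map $s_C\mapsto t_C$ itself depends on $t_A$, so the argument must genuinely establish that the saddle point of $\phi$ sits at the equilibrium $s_C^*$ (equivalently, that A's best response survives holding $s_C$ fixed at its equilibrium value rather than minimized over it) — which is what the symmetry-plus-saddle-point reduction is designed to secure.
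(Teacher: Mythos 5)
Your proposal is, in substance, the paper's own proof: Player C's optimality is transferred via the bijection $s_C\mapsto t_C(t_A^*,t_B^*,s_C)$; the zero-sum identity together with A--B symmetry on the diagonal ($t_A^*=t_B^*$, so $u_A=u_B$ and $2u_A=-u_C$ along the $t_C$-direction) turns C's maximization of $u_C$ into minimization of $u_A$, yielding the saddle point $\min_{t_C}\max_{t_A}u_A=\max_{t_A}\min_{t_C}u_A=u_A(t^*,t^*,t_C^*)$; Lemma \ref{l3} transports this value to the $(t_A,s_C)$ coordinates; and single-valuedness pins down the optimizing arguments. On the obstacle you correctly flag (matching arguments, not just values), the paper resolves it without invoking abstract saddle-point existence: since any $s_C$ is realized by some $t_C$, $\arg\min_{s_C}u_A(t^*,t^*,t_C(t^*,t^*,s_C))=s_C^0=f_C(t^*,t^*,t_C^*)$, and because $\max_{t_A}u_A(t_A,t^*,t_C(t_A,t^*,s_C))\geq u_A(t^*,t^*,t_C(t^*,t^*,s_C))$ pointwise in $s_C$ while the two minima coincide (the paper's (\ref{z2})), any minimizer of the upper envelope must minimize the lower function, whose minimizer is the unique $s_C^0$; evaluating at $s_C^0$ then forces $\arg\max_{t_A}u_A(t_A,t^*,t_C(t_A,t^*,s_C^0))=t^*$. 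Your value-based identification (pin $\bar t_A$ as the maximin point via the fixed-$t_A$ bijection, then pin $\bar s_C$) is an equally valid way to close this, given the same uniqueness assumptions.

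One step would fail as you state it: single-valued best responses do \emph{not} imply that every equilibrium of an A--B symmetric game is symmetric (a two-cycle of the best-response map, $b(x)=y$, $b(y)=x$ with $x\neq y$, gives an asymmetric equilibrium). The paper never makes this claim; instead it \emph{constructs} a symmetric equilibrium by a double fixed-point argument --- first a fixed point $t^*(t_C)$ of $t\mapsto\arg\max_{t_A}u_A(t_A,t,t_C)$ given $t_C$, then a fixed point of $t\mapsto t^*(t_C(t))$ --- and proves the theorem for that profile. Since $t_A^*=t_B^*$ is indispensable for your reduction $2u_A=-u_C$, you must either add this construction or restrict the statement to the symmetric equilibrium (under an implicit uniqueness assumption your assertion becomes harmless, but that assumption is nowhere stated). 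Finally, note the paper proves only the forward direction --- the $t$-equilibrium induces an equilibrium of the mixed game with the same outcome --- so your ``reversing the steps'' converse is an extra claim that would need its own argument, though it is not required for the theorem as used.
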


\begin{proof}
\begin{enumerate}
	\item Consider a situation $(t_A,t_B,t_C)=(t,t,t_C)$. By symmetry for Players A and B,
\[\max_{t_A\in T}u_A(t_A, t, t_C)=\max_{t_B\in T}u_B(t,t_B,t_C),\]
and
\[\arg\max_{t_A\in T}u_A(t_A, t, t_C)=\arg\max_{t_B\in T}u_B(t,t_B,t_C)\in T,\]
given $t_C$. Let
\[t_C(t)=\arg\max_{t_C\in T}u_C(t,t,t_C).\]
We assume that it is a single-valued continuous function. 

Consider the following function.
\[t\rightarrow \arg\max_{t_A\in T}u_A(t_A, t, t_C),\ \mathrm{given}\ t_C.\]
This function is continuous and $T$ is compact. Thus, there exists a fixed point given $t_C$. Denote it by $t^*(t_C)$, then
\[t^*(t_C)=\arg\max_{t_A\in T}u_A(t_A, t^*(t_C), t_C)=\arg\max_{t_B\in T}u_B(t^*(t_C), t_B,t_C),\ \mathrm{given}\ t_C.\]
Now we consider the following function.
\[t\rightarrow t^*(t_C(t)).\]
This also has a fixed point. Denote it by $t^*$ and $t_C(t^*)$ by $t_C^*$, then we have
\[t^*=\arg\max_{t_A\in T}u_A(t_A, t^*, t_C^*)=\arg\max_{t_B\in T}u_B(t^*, t_B,t_C^*),\]
\[t_C^*=\arg\max_{t_C\in T}u_C(t^*,t^*,t_C).\]
\[\max_{t_A\in T}u_A(t_A, t^*, t_C^*)=u_A(t^*, t^*, t_C^*)=\max_{t_B\in T}u_B(t^*, t_B, t_C^*)=u_B(t^*, t^*, t_C^*),\]
and
\[\max_{t_C\in T}u_C(t^*, t^*, t_C)=u_C(t^*, t^*, t_C^*).\]
$(t_A,t_B,t_C)=(t^*,t^*,t_C^*)$ is a Nash equilibrium when all players choose $t_i$'s

\item Because the game is zero-sum,
\[u_A(t^*, t^*, t_C)+u_B(t^*, t^*,t_C)+u_C(t^*, t^*,t_C)=0.\]
By symmetry $u_A(t^*, t^*,t_C)=u_B(t^*, t^*,t_C)$. Thus,
\[2u_A(t^*, t^*,t_C)+u_C(t^*, t^*,t_C)=0.\]
This means
\[2u_A(t^*, t^*,t_C)=-u_C(t^*, t^*,t_C),\]
and
\[2\min_{t_C\in T}u_A(t^*, t^*,t_C)=-\max_{t_C\in T}u_C(t^*, t^*,t_C).\]
From this and symmetry for Players A and B, we get
\[\arg\min_{t_C\in T}u_A(t^*, t^*,t_C)=\arg\min_{t_C\in T}u_B(t^*, t^*,t_C)=\arg\max_{t_C\in T}u_C(t^*, t^*,t_C)=t_C^*.\]
We have
\[\min_{t_C\in T}u_A(t^*, t^*, t_C^*)=u_A(t^*, t^*, t_C^*)=\max_{t_A\in T}u_A(t_A,t^*,t_C^*),\]
\[\min_{t_C\in T}u_B(t^*, t^*, t_C^*)=u_B(t^*, t^*, t_C^*)=\max_{t_B\in T}u_B(t^*,t_B,t_C^*).\]
Therefore,
\[\min_{t_C\in T}\max_{t_A\in T}u_A(t_A, t^*, t_C)\leq \max_{t_A\in T}u_A(t_A, t^*, t_C^*)=\min_{t_C\in T}u_A(t^*,t^*, t_C)\leq \max_{t_A\in T}\min_{t_C\in T}u_A(t_A,t^*, t_C).\]
From Lemma \ref{l3} we obtain
\begin{align}
&\min_{t_C\in T}\max_{t_A\in T}u_A(t_A, t^*, t_C)=\max_{t_A\in T}u_A(t_A, t^*, t_C^*)=\min_{t_C\in T}u_A(t^*,t^*, t_C)\label{l3-1}\\
&=\max_{t_A\in T}\min_{t_C\in T}u_A(t_A,t^*, t_C)=\min_{s_C\in S}\max_{t_A\in T}u_A(t_A, t^*, t_C(t_A,t^*,s_C))\notag\\
&=\max_{t_A\in T}\min_{s_C\in T}u_A(t_A,t^*, t_C(t_A,t^*,s_C)).\notag
\end{align}

\item Let
\[s^0_C(t^*)=f_C(t^*,t^*,t_C^*).\]
Since any value of $s_C$ can be realized by appropriately choosing $t_C$,
\begin{equation}
\min_{s_C\in S}u_A(t^*,t^*,t_C(t^*,t^*,s_C))=\min_{t_C\in T}u_A(t^*,t^*,t_C)=u_A(t^*,t^*,t_C^*).\label{z1}
\end{equation}
Thus,
\[\arg\min_{s_C\in S}u_A(t^*,t^*,t_C(t^*,t^*,s_C))=s^0_C(t^*).\]
(\ref{l3-1}) and (\ref{z1}) mean
\begin{equation}
\min_{s_C\in S}\max_{t_A\in T}u_A(t_A,t^*,t_C(t_A,t^*,s_C))=\min_{s_C\in S}u_A(t^*,t^*,t_C(t^*,t^*,s_C)).\label{z2}
\end{equation}
We have
\[\max_{t_A\in T}u_A(t_A,t^*,t_C(t_A,t^*,s_C))\geq u_A(t^*,t^*,t_C(t^*,t^*,s_C)).\]
Therefore,
\[\arg\min_{s_C\in S}\max_{t_A\in T}u_A(t_A,t^*,t_C(t_A,t^*,s_C))=\arg\min_{s_C\in S}u_A(t^*,t^*,t_C(t^*,t^*,s_C))=s^0_C(t^*)\]
Thus, by (\ref{z2})
\begin{align*}
&\min_{s_C\in S}\max_{t_A\in T}u_A(t_A,t^*,t_C(t_A,t^*,s_C))=\max_{t_A\in T}u_A(t_A,t^*,t_C(t^*,t^*,s^0_C(t^*)))\\
=&\min_{s_C\in S}u_A(t^*,t^*,t_C(t^*,t^*,s_C))=u_A(t^*,t^*,t_C(t^*,t^*,s^0_C(t^*))).
\end{align*}
Therefore,
\begin{equation}
\arg\max_{t_A\in T}u_A(t_A,t^*,t_C(t_A,t^*,s^0_C(t^*)))=t^*.\label{t1-2}
\end{equation}
By symmetry for Players A and B,
\begin{equation}
\arg\max_{t_B\in T}u_B(t^*,t_B,t_C(t^*,t_B,s^0_C(t^*)))=t^*.\label{t1-3}
\end{equation}

On the other hand, because any value of $s_C$ is realized by appropriately choosing $t_C$,
\[\max_{s_C\in S}u_C(t^*,t^*,t_C(t^*,t^*,s_C))=\max_{t_C\in T}u_C(t^*,t^*,t_C)=u_C(t^*,t^*,t_C^*).\]
Therefore,
\begin{equation}
\arg\max_{s_C\in S}u_C(t^*,t^*,t_C(t^*,t^*,s_C))=s^0_C(t^*)=f_C(t^*,t^*,t_C^*).\label{t1-1}
\end{equation}

From (\ref{t1-2}), (\ref{t1-3}) and (\ref{t1-1}), $(t^*,t^*,t_C(t^*,t^*,s^0_C(t^*)))$ is a Nash equilibrium which is equivalent to $(t^*,t^*,t_C^*)$. 
\end{enumerate}
\end{proof}

Interchanging $t_i$ and $s_i$ for each player, we can show 
\begin{theorem}
The equilibrium when all players choose $s_i$'s is equivalent to the equilibrium when Player C chooses $t_C$ and Players A and B choose $s_i$'s as their strategic variables.\label{t2}
\end{theorem}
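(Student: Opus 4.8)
The plan is to obtain Theorem~\ref{t2} as the image of Theorem~\ref{t1} under the formal interchange of the two families of strategic variables. I would begin by reparametrizing the game in the $s$-coordinates: define
\[
v_i(s_A,s_B,s_C)=u_i\bigl(g_A(s_A,s_B,s_C),g_B(s_A,s_B,s_C),g_C(s_A,s_B,s_C)\bigr),\quad i\in\{A,B,C\}.
\]
Since $(g_A,g_B,g_C)$ is a continuous bijection, each $v_i$ is continuous, and the quasi-concavity of $u_i$ in $s_i$ together with quasi-convexity in $s_j$ (which is among our assumptions) makes $v_i$ quasi-concave in $s_i$ and quasi-convex in $s_j$. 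Moreover $v_A+v_B+v_C=u_A+u_B+u_C=0$, so the reparametrized game is again a partially asymmetric three-player zero-sum game in which Player C's strategic variable is $s_C$, those of Players A and B are $s_A,s_B$, and the ``dual'' variable for C is $t_C$ (recovered from $t_C=g_C(s_A,s_B,s_C)$). Thus Theorem~\ref{t2} is precisely Theorem~\ref{t1} read in these coordinates.

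Carrying this out requires the dual of Lemma~\ref{l3}, namely the chain of minimax equalities with $s_A$ (resp.\ $s_B$) in the outer maximization and with Player C deviating in $t_C$ in place of $s_C$. I would prove it by repeating the argument of Lemma~\ref{l3} verbatim after the substitution $t_i\leftrightarrow s_i$, $f\leftrightarrow g$, $T\leftrightarrow S$. The only facts the argument actually uses are: that when A and B fix $s_A,s_B$ and C fixes $t_C$, the induced $s_C=s_C(s_A,s_B,t_C)$ is obtained by inverting $t_C=g_C(s_A,s_B,s_C)$; that the relevant $\arg\max$ maps are single-valued and continuous, so that the maximum theorem applies; and that any value of $s_C$ is realizable by a suitable $t_C$ and conversely. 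The last point is immediate from the invertibility of $(f_A,f_B,f_C)$ and $(g_A,g_B,g_C)$, so this step presents no new difficulty.

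With the dual lemma in hand I would replicate the three steps of the proof of Theorem~\ref{t1} applied to $v_A,v_B,v_C$. First, a symmetric equilibrium $(s^*,s^*,s_C^*)$ of the all-$s$ game is produced by the same two fixed-point arguments (a fixed point $s^*(s_C)$ of $s\mapsto\arg\max_{s_A}v_A(s_A,s,s_C)$ for fixed $s_C$, then a fixed point of $s\mapsto s^*(s_C(s))$). Second, the zero-sum identity $2v_A(s^*,s^*,s_C)+v_C(s^*,s^*,s_C)=0$ together with symmetry pins down $s_C^*=\arg\max_{s_C}v_C=\arg\min_{s_C}v_A=\arg\min_{s_C}v_B$, and combining this with the dual of Lemma~\ref{l3} yields the corresponding equality of $\min_{t_C}\max_{s_A}$, $\max_{s_A}\min_{t_C}$, and their $s_C$-counterparts. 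Third, setting $t_C^0=g_C(s^*,s^*,s_C^*)$ and writing $s^0_C$ for the induced value, I would verify exactly as in step~3 of Theorem~\ref{t1} that the profile in which A, B play $s^*$ and C plays $t_C^0$ is a Nash equilibrium of the mixed game and is equivalent to $(s^*,s^*,s_C^*)$.

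The one place where genuine verification rather than mechanical substitution is needed is the transfer of the A--B symmetry to the $s$-coordinates, i.e.\ the identity $v_A(s_A,s_B,s_C)=v_B(s_B,s_A,s_C)$. This follows from the symmetry of the game for A and B at the level of the coordinate maps, $f_A(t_A,t_B,t_C)=f_B(t_B,t_A,t_C)$ and $f_C(t_A,t_B,t_C)=f_C(t_B,t_A,t_C)$ (with the inherited symmetry of $g$), combined with $u_A(t_A,t_B,t_C)=u_B(t_B,t_A,t_C)$; I would state this explicitly, since every symmetry step in the replicated proof rests on it. Apart from this, the proof is a line-by-line dualization, and the single-valuedness and continuity of the argmax maps are assumed just as in Theorem~\ref{t1}.
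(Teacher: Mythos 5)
Your proposal follows exactly the paper's own route: the paper proves Theorem~\ref{t2} in a single line, ``interchanging $t_i$ and $s_i$ for each player'' (with the corresponding swaps $f_i\leftrightarrow g_i$, $T\leftrightarrow S$) in Lemma~\ref{l3} and the three-step proof of Theorem~\ref{t1}, which is precisely the dualization you carry out in detail. Your explicit check that the A--B symmetry transfers to the $s$-coordinates through the symmetry of the coordinate maps fills in a point the paper leaves implicit, but it does not constitute a different approach.
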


\section{Various examples}\label{ex}

Consider a game of relative profit maximization under oligopoly including three firms with differentiated goods\footnote{%
About relative profit maximization in an oligopoly see \cite{mm}, \cite{ebl2}, \cite{eb2}, \cite{st}, \cite{eb1}, \cite{ebl1} and  \cite{redondo}}. It is a three-players zero-sum game with two strategic variables. The firms are A, B and C. The strategic variables are the outputs and the prices of their goods. We consider the following six patterns of competition. 

\begin{enumerate}
\item Pattern 1: All firms determine their outputs. It is a Cournot case.

The inverse demand functions are 
\begin{equation*}
p_A=a-x_A-bx_B-bx_C,
\end{equation*}
\begin{equation*}
p_B=a-x_B-bx_A-bx_C,
\end{equation*}
and 
\begin{equation*}
p_C=a-x_C-bx_A-bx_B,
\end{equation*}
where $0<b<1$. $p_A$, $p_B$ and $p_C$ are the prices of the goods of Firms A, B and C, and $x_A$, $x_B$ and $x_C$ are the outputs of them.

\item Pattern 2: Firms A and B determine their outputs, and Firm C determines the price of its good.

From the inverse demand functions, 
\begin{equation*}
p_A=(1-b)a+b^2x_B-bx_B+b^2x_A-x_A+bp_C,
\end{equation*}
\begin{equation*}
p_B=(1-b)a+b^2x_B-x_B+b^2x_A-bx_A+bp_C,
\end{equation*}
and 
\begin{equation*}
x_C=a-bx_B-bx_A-p_C
\end{equation*}
are derived.

\item Pattern 3: Firms A and C determine their outputs, and Firm B determines the price of its good.

From the inverse demand functions, 
\begin{equation*}
p_A=(1-b)a+b^2x_C-bx_C+b^2x_A-x_A+bp_B,
\end{equation*}
\begin{equation*}
p_C=(1-b)a+b^2x_C-x_C+b^2x_A-bx_A+bp_B,
\end{equation*}
and 
\begin{equation*}
x_B=a-bx_C-bx_A-p_B
\end{equation*}
are derived.

\item Pattern 4: Firms A and B determine the prices of their goods, and Firm C determines its output.

From the above inverse demand functions, we obtain 
\begin{equation*}
p_C=\frac{(1-b)a+2b^2x_C-bx_C-x_C+bp_A+bp_B}{1+b},
\end{equation*}
\begin{equation*}
x_B=\frac{(1-b)a+b^2x_C-bx_C+bp_A-p_B}{(1-b)(1+b)},
\end{equation*}
and 
\begin{equation*}
x_A=\frac{(1-b)a+b^2x_C-bx_C-p_A+bp_B}{(1-b)(1+b}.
\end{equation*}

\item Pattern 5: Firms A and C determine the prices of their goods, and Firm B determines its output.

From the above inverse demand functions, we obtain 
\begin{equation*}
p_B=\frac{(1-b)a+2b^2x_B-bx_B-x_B+bp_C+bp_A}{1+b},
\end{equation*}
\begin{equation*}
x_A=\frac{(1-b)a+b^2x_B-bx_B+bp_C-p_A}{(1-b)(1+b)},
\end{equation*}
and 
\begin{equation*}
x_C=\frac{(1-b)a+b^2x_B-bx_B-p_C+bp_A}{(1-b)(1+b}.
\end{equation*}

\item Pattern 6: All firms determine the prices of their goods. It is a Bertrand case.

From the inverse demand functions, the direct demand functions are derived as
follows; 
\begin{equation*}
x_A=\frac{(1-b)a-(1+b)p_A+b(p_A+p_C)}{(1-b)(1+2b)},
\end{equation*}
\begin{equation*}
x_B=\frac{(1-b)a-(1+b)p_B+b(p_B+p_C)}{(1-b)(1+2b)},
\end{equation*}
and 
\begin{equation*}
x_C=\frac{(1-b)a-(1+b)p_C+b(p_A+p_B)}{(1-b)(1+2b)}.
\end{equation*}
\end{enumerate}

The absolute profits of the firms are 
\begin{equation*}
\pi_A=p_Ax_A-c_Ax_A,
\end{equation*}
\begin{equation*}
\pi_B=p_Bx_B-c_Bx_B,
\end{equation*}
and 
\begin{equation*}
\pi_C=p_Cx_C-c_Cx_C.
\end{equation*}
$c_A$, $c_B$ and $c_C$ are the constant marginal costs of Firms A, B and C. The relative profits of the firms are 
\begin{equation*}
\psi_A=\pi_A-\frac{\pi_B+\pi_C}{2},
\end{equation*}
\begin{equation*}
\psi_B=\pi_B-\frac{\pi_A+\pi_C}{2},
\end{equation*}
and 
\begin{equation*}
\psi_C=\pi_C-\frac{\pi_A+\pi_B}{2}.
\end{equation*}

The firms determine the values of their strategic variables to maximize the
relative profits. We see 
\begin{equation*}
\psi_A+\psi_B+\psi_C=0,
\end{equation*}
so the game is zero-sum. We assume $c_A=c_B$, that is, the game is symmetric for Firms A and B. However, $c_C$ is not equal to $c_A$. Thus, the game is partially asymmetric.

We calculate the equilibrium outputs of the firms in the above six patterns.
\begin{enumerate}
	\item Pattern 1
\[x_A=\frac{bc_C-4c_A-ab+4a}{(4-b)(b+2)},\]
\[x_B=\frac{bc_C-4c_A-ab+4a}{(4-b)(b+2)},\]
\[x_C=\frac{bc_C+4c_C-2bc_A+ab-4a}{(4-b)(b+2)}.\]

	\item Pattern 2
\[x_A=\frac{bc_C-4c_A-ab+4a}{(4-b)(b+2)},\]
\[x_B=\frac{bc_C-4c_A-ab+4a}{(4-b)(b+2)},\]
\[x_C=\frac{bc_C+4c_C-2bc_A+ab-4a}{(b-4)(b+2)}.\]

	\item Pattern 3
\[x_A=\frac{5b^2c_C+4bc_C-3b^3c_A+6b^2c_A+4bc_A-16c_A+3ab^3-11ab^2-8ab+16a}{(4-b)(1-b)(b+2)(3b+4)},\]
\[x_B=\frac{bc_C-4c_A-ab+4a}{(4-b)(b+2)},\]
\[x_C=\frac{7b^2c_C-16c_C-3b^3c_A+4b^2c_A+8bc_A+3ab^3-11ab^2-8ab+16a}{(4-b)(1-b)(b+2)(3b+4)}.\]

	\item Pattern 4
\[x_A=\frac{2b^2c_C+bc_C+3b^2c_A-2bc_A-4c_A-5ab^2+ab+4a}{(1-b)(b+2)(5b+4))},\]
\[x_B=\frac{2b^2c_C+bc_C+3b^2c_A-2bc_A-4c_A-5ab^2+ab+4a}{(1-b)(b+2)(5b+4))},\]
\[x_C=\frac{b^2c_C-3bc_C-4c_C+4b^2c_A+2bc_A-5ab^2+ab+4a}{(1-b)(b+2)(5b+4))}.\]

	\item Pattern 5
\[x_A=\frac{3b^2c_C-b^3c_C+4bc_C+6b^3c_A+16b^2c_A-12bc_A-16c_A-5ab^3-19ab^2+8ab+16a}{(1-b)(b+2)(b+4)(5b+4)},\]
\[x_B=\frac{2b^2c_C+bc_C+3b^2c_A-2bc_A-4c_A-5ab^2+ab+4a}{(1-b)(b+2)(5b+4)},\]
\[x_C=\frac{4b^3c_C+7b^2c_C-16bc_C-16c_C+b^3c_A+12b^2c_A+8bc_A-5ab^3-19ab^2+8ab+16a}{(1-b)(b+2)(b+4)(5b+4)}.\]

	\item Pattern 6
\[x_A=\frac{2b^2c_C+bc_C+3b^2c_A-2bc_A-4c_A-5ab^2+ab+4a}{(1-b)(b+2)(5b+4)},\]
\[x_B=\frac{2b^2c_C+bc_C+3b^2c_A-2bc_A-4c_A-5ab^2+ab+4a}{(1-b)(b+2)(5b+4)},\]
\[x_C=\frac{b^2c_C-3bc_C-4c_C+4b^2c_A+2bc_A-5ab^2+ab+4a}{(1-b)(b+2)(5b+4)}.\]
\end{enumerate}
We find that Pattern 1 is equivalent to Pattern 2 (an example of Theorem \ref{t1}), but it is not equivalent to Pattern 3, and that Pattern 6 is equivalent to Pattern 4 (an example of Theorem \ref{t2}), but it is not equivalent to Pattern 5. Pattern 1 (Cournot Pattern) and Pattern 6 (Bertrand) are not equivalent unless we have $c_C=c_A$.

\section{Concluding Remarks}

In this paper we have examined equilibria in a partially asymmetric three-players zero-sum game under various situations. We want to extend the results of this paper to a general multi-players zero-sum game.

\section*{Acknowledgment}

This work was supported by Japan Society for the Promotion of Science KAKENHI Grant Number 15K03481  and 18K01594.


\begin{thebibliography}{Matsumura, Matsushima and Cato(2013)}


\bibitem[Hattori, Satoh and Tanaka (2018)]{hst} Hattori, M., Satoh, A., Tanaka, Y., (2018), ``Minimax theorem and Nash equilibrium of symmetric three-players zero-sum game with two strategic variables,'' MPRA Paper 85503, University Library of Munich, Germany. 

\bibitem[Kindler (2005)]{kind} Kindler, J. (2005), ``A simple proof of
Sion's minimax theorem,'' \textit{American Mathematical Monthly}, \textbf{112%
}, pp. 356-358.

\bibitem[Komiya (1988)]{komiya} Komiya, H. (1988), ``Elementary proof for
Sion's minimax theorem,'' \textit{Kodai Mathematical Journal}, \textbf{11},
pp. 5-7.

\bibitem[Matsumura, Matsushima and Cato(2013)]{mm} Matsumura, T., N.
Matsushima and S. Cato (2013) ``Competitiveness and R\&D competition
revisited,'' \textit{Economic Modelling}, \textbf{31}, pp. 541-547.

\bibitem[Satoh and Tanaka (2013)]{ebl2} Satoh, A. and Y. Tanaka (2013)
``Relative profit maximization and Bertrand equilibrium with quadratic cost
functions,'' \textit{Economics and Business Letters}, \textbf{2}, pp.
134-139.

\bibitem[Satoh and Tanaka (2014a)]{eb2} Satoh, A. and Y. Tanaka (2014a)
``Relative profit maximization and equivalence of Cournot and Bertrand
equilibria in asymmetric duopoly,'' \textit{Economics Bulletin}, \textbf{34}%
, pp. 819-827.

\bibitem[Satoh and Tanaka (2014b)]{st} Satoh, A. and Y. Tanaka (2014b),
``Relative profit maximization in asymmetric oligopoly,'' \textit{Economics
Bulletin}, \textbf{34}, pp. 1653-1664.

\bibitem[Satoh and Tanaka (2017)]{st17} Satoh, A. and Y. Tanaka (2017), ``Two person zero-sum game with two sets of strategic variables,'' MPRA Paper 73272, University Library of Munich, Germany.

\bibitem[Sion (1958)]{sion} Sion, M. (1958), ``On general minimax
theorems,'' \textit{Pacific Journal of Mathematics}, \textbf{8}, pp. 171-176.

\bibitem[Tanaka (2013a)]{eb1} Tanaka, Y. (2013a) ``Equivalence of Cournot
and Bertrand equilibria in differentiated duopoly under relative profit
maximization with linear demand,'' \textit{Economics Bulletin}, \textbf{33},
pp. 1479-1486.

\bibitem[Tanaka (2013b)]{ebl1} Tanaka, Y. (2013b) ``Irrelevance of the choice of strategic variables in duopoly under relative profit
maximization,'' \textit{Economics and Business Letters}, \textbf{2}, pp. 75-83.

\bibitem[Vega-Redondo(1997)]{redondo} Vega-Redondo, F. (1997) ``The
evolution of Walrasian behavior,'', \textit{Econometrica}, \textbf{65}, pp.
375-384.
\end{thebibliography}
\end{document}